\documentclass[12pt]{amsart}
\usepackage[margin=3cm]{geometry}
\usepackage{amssymb}
\usepackage{latexsym}
\usepackage[all,cmtip,arrow,2cell]{xy}
\usepackage{graphicx}
\usepackage{hyperref}
\usepackage{comment}
\usepackage{mathtools}
\mathtoolsset{showonlyrefs=true}

\hyphenation{group-oid group-oids quasi-hori-zon-tal e-di-tion}

\theoremstyle{plain}
\newtheorem{thm}{Theorem}[section]
 
\newtheorem{lemma}[thm]{Lemma}

\theoremstyle{definition}

\newtheorem{ex}[thm]{Example}

 \newcommand{\half}{\textstyle{\frac{1}{2}}}
 \newcommand{\frakg}{\mathfrak{g}}

\newcommand{\arrows}{\,\lower1pt\hbox{$\longrightarrow$}\hskip-.24in\raise2pt
             \hbox{$\longrightarrow$}\,}

\begin{document}


\title{Plasma in monopole background is not twisted Poisson}
\thanks{C. Sard\'on acknowledges the hospitality  of the Department of Mathematics, University of California, Berkeley, during her stay as  Visiting Scholar.  The authors would like to thank Phillip Morrison and Thomas Strobl for useful comments on the manuscript.}%

----------
--------
\author{Manuel Lainz}
 \address{Instituto de Ciencias Matem\'aticas (CSIC-UAM-UC3M-UCM),
   calle Nicol\'as Cabrera, 13-15, Campus Cantoblanco, UAM
   28049 Madrid, Spain}
   \email{manuel.lainz@icmat.es}
   
\author{Cristina Sard\'on}%
\address{%
 Instituto de Ciencias Matem\'aticas (CSIC),
   calle Nicol\'as Cabrera, 13-15, Campus Cantoblanco, 
   28049 Madrid, Spain
}%
 \email{cristinasardon@icmat.es}

\author{Alan Weinstein}
\address{Department of Mathematics, University of California, Berkeley, CA 94720, USA}
\email{alanw@math.berkeley.edu}


\begin{abstract}
For a particle in the magnetic field of a cloud of monopoles, the naturally associated 2-form on phase space is not closed, and so the corresponding bracket operation on functions does not satisfy the Jacobi identity.  Thus, it is not a Poisson bracket; however, it is twisted Poisson in the sense that the Jacobiator comes from a closed 3-form.  

The space $\mathcal D$ of densities on phase space is the state space of a plasma.  The twisted Poisson bracket on phase-space functions gives rise to a bracket on functions on $\mathcal D$.
In the absence of monopoles, this is again a Poisson bracket.
It has recently been shown by Heninger and Morrison that this bracket is not Poisson when monopoles are present.  In this note, we give an example where it is not even twisted Poisson.
\end{abstract}

\maketitle
\section{Introduction}
\label{sec-intro}
The Vlasov-Maxwell equations describe the time-evolution in the continuum limit of a collisionless plasma of charged particles interacting via the electromagnetic fields which they generate.  It was shown by Morrison \cite{mo:,we-mo:} and Marsden and Weinstein \cite{MW1} that these equations form a hamiltonian system with respect to a Poisson structure derived by reduction from a symplectic structure on a product of cotangent bundles.

The fact that the Poisson structure satisfies the Jacobi identity is related to the vanishing of the divergence of the magnetic field, which depends on the absence of monopoles.  It has
been known for some time (see, for instance, \cite{mo2:}) that, when monopoles are present, the Jacobi identity is violated, so that the system is not hamiltonian, thus raising questions about the possibility of its quantization, and perhaps even about its physical validity.

Now the motion of a single charged particle in a divergence-free background magnetic field is well known to be hamiltonian.  When there is a global vector potential for the magnetic field, the Poisson bracket on position-momentum phase space may be taken to be the canonical one, with the vector potential appearing as part of the hamiltonian function generating the motion.  Alternatively, using the kinetic energy metric to convert the vector potential to a 1-form $\mathbf A$, one may eliminate its contribution to the hamiltonian by adding to the symplectic structure on phase space the pullback from configuration space of the magnetic field, represented as the closed 2-form $\mathbf B=d\mathbf A$.  In this way, one can express in hamiltonian form the equations of motion when there is no global vector potential, such as in the field outside a magnetic monopole.  (See, for instance, 
\cite{ma-ra:mechanics}.)  The original approach by 
Dirac \cite{di:} to this situation used a vector potential which was singular along a ``string" stretching from the monopole to infinity.

Wu and Yang \cite{wu-ya:} used the formalism of connections on vector bundles to avoid the singular strings.\footnote{In the quantum version, the wave functions take their values in a complex line bundle with a hermitian connection whose curvature is $\bf B$.  This approach is also useful for understanding the Aharonov-Bohm effect in the complement of a charge-carrying wire which is shielded in such a way that the magnetic field vanishes outside a neighborhood of the wire.  Here, the connection is flat but has nontrivial holonomy along paths encircling the wire; this affects the quantum but not the classical behavior of the particle.}

When the charged particle moves in a background magnetic field with nonzero divergence created by a ``cloud" of monopoles, the 2-form $\mathbf B$ is not even closed, and so the resulting ``Poisson bracket" on phase space no longer satisfies the Jacobi identity.  This fact came to our attention in Heninger and Morrison \cite{HM1}, but there are many previous references to this fact, as cited in \cite{HM1}.    However, the failure of the Jacobi identity is expressed in terms of a closed 3-form, making the bracket into what is known as a ``twisted Poisson bracket" \cite{al-st:current,ku-sz:symplectic, SW1}. This leads to the question of whether the plasma bracket, known not to be Poisson, is at least twisted Poisson.   

In this note, we consider a simplified system in which a plasma of charged particles, none of them monopoles, interacts with a background magnetic field which may not be divergence-free.  We show that the plasma bracket in this case can fail to be twisted Poisson, even though the single-particle bracket is twisted Poisson.
To show this, we use the fact \cite{SW1} that the image of the structural bivector for a twisted Poisson structure is an integrable (generally singular) distribution; thus, we will give an example of a magnetic field for which this distribution is not integrable.\footnote{In a paper on twisted Poisson structures in the dynamics of systems with nonholonomic constraints, Balseiro and Garc\'ia-Naranjo \cite{bana:} prove the converse result that, under a constant rank assumption, integrability of the image distribution of a bivector implies that it is a twisted Poisson structure for a suitable choice of 3-form.  Their paper also gives a nice general exposition of twisted Poisson geometry.}

\section{Plasma dynamics}
\subsection{Maxwell-Vlasov equations}
The usual Maxwell-Vlasov equations are a system of partial differential equations for the time evolution of quantities $f_s(\mathbf x,\mathbf v,t)$, $\mathbf E(\mathbf x,t)$, and $\mathbf B(\mathbf x,t)$.  Each $f_s$ is the density in phase space (position $\mathbf x$ and velocity $\mathbf v$) of a particle species $s$ with mass $m_s$ and electric charge $e_s$.  $\mathbf E$ and $\mathbf B$ are, as usual, the electric and magnetic fields.  The equations express the transport of phase space density for each species under the motion determined by the Lorentz equations, along with Maxwell's equations for the electromagnetic field, taking into account the charge density $\rho=\sum_s e_s\int{f_s d{\mathbf v}}$ and current density
$\mathbf j=\sum_s e_s\int{f_s {\mathbf v} d{\mathbf v}}$.

In the hamiltonian formulation of Morrison and Marsden/Weinstein, the hamiltonian functional is
\begin{equation}
\mathcal{H}=\sum_{s} \frac{m_s}{2}\int{|v|^2 f_s d{\mathbf x} d{\mathbf v}}+\frac{1}{8\pi}\int{\left(|{\mathbf E}|^2+|{\mathbf B}|^2\right)d{\mathbf x}},
\end{equation}
where $\mathbf E$ and $\mathbf B$ are determined by the $f_s$ according to the Maxwell equations which relate them to the charge and current densities.

When monopoles are admitted, we add to the particle data some nonzero monopole strengths $g_s$, leading to a monopole density and monopole current density which affect the definitions of $\mathbf E$ and $\mathbf B$, but the form of the hamiltonian remains the same.   

The full Poisson bracket, monopole terms included, is \cite{HM1}: 
\begin{align}\label{bracketmonopole}
\{F,G\}&=\sum_s \frac{1}{m_s}\int{f_s\left\{ F_{f_s}, G_{f_s}\right\}_{\mathrm{CAN}}  d{\mathbf x}d{\mathbf v}}
\\&+\sum_s \frac{e_s}{m_s^2c} \int{f_s{\mathbf B} \cdot \left(\frac{\partial F_{f_s}}{\partial {\mathbf v}} \times\frac{\partial G_{f_s}}{\partial {\mathbf v}}\right) d{\mathbf x}d{\mathbf v}}\nonumber
\\&-\sum_s \frac{g_s}{m_s^2c} \int{f_s{\mathbf E} \cdot \left(\frac{\partial F_{f_s}}{\partial {\mathbf v}} \times\frac{\partial G_{f_s}}{\partial {\mathbf v}}\right) d{\mathbf x}d{\mathbf v}}\nonumber \\
&+\sum_s\frac{4\pi e_s}{m_s} \int{f_s \left(G_{{\mathbf E}}\cdot\frac{\partial F_{f_s}}{\partial {\mathbf v}}
- F_{{\mathbf E}}\cdot\frac{\partial G_{f_s}}{\partial {\mathbf v}}\right)d{\mathbf x}d{\mathbf v}}\nonumber\\
&+\sum_s\frac{4\pi g_s}{m_s} \int{f_s \left(G_{{\mathbf B}}\cdot\frac{\partial F_{f_s}}{\partial {\mathbf v}}
- F_{{\mathbf B}}\cdot\frac{\partial G_{f_s}}{\partial {\mathbf v}}\right)d{\mathbf x}d{\mathbf v}}\nonumber\\
&+4\pi c\int{ \left(F_{{\mathbf E}}\cdot \left(\nabla \times G_{{\mathbf B}}\right)-G_{{\mathbf E}}\cdot\left(\nabla \times F_{{\mathbf B}}\right)\right) d{\mathbf x}}\nonumber.
\end{align}
Here, the subscripts on the functionals $F$ and $G$ of $f_s$, $\mathbf E$, and $\mathbf B$ denote the functional derivatives w.r.t.
the subscript variables, e.g.~$F_{f_s}=\delta F/\delta f_{s}$, and  $\left\{~, ~\right\}_{\mathrm{CAN}}$ is the usual canonical bracket on functions on $(\mathbf x,\mathbf v)$ phase space.

 The jacobiator of the bracket is \cite{HM1}:

{\begin{footnotesize}
\begin{align}\label{eq-bigjacobiator}
\{F,\{G,H\}\}+\text{cyc}=\sum_s \int{f_s\left(e_s\nabla \cdot {\mathbf
B} -g_s\nabla \cdot {\mathbf
E} \right)\left[\frac{\partial H_{f_s}}{\partial {\mathbf v}}\cdot
\left(\frac{\partial F_{f_s}}{\partial {\mathbf v}} \times\frac{\partial
G_{f_s}}{\partial {\mathbf v}}\right)\right]d{\mathbf x}d{\mathbf v}}.
\end{align}
\end{footnotesize}}

In the usual Maxwell-Vlasov situation, where there are no monopoles,  $\nabla \cdot \mathbf B $ and all the $g_s$ are zero, and so the jacobiator vanishes; i.e.~the Jacobi identity is satisfied.  In the general case, though, it is not.

\subsection{Particle dynamics}
\label{sec-nottwisted}
We next look at the dynamics of a single particle in an electromagnetic field.   Later, we will see how the plasma dynamical equations above are related to this.

For the motion of a particle with mass $m$ and charge $e$ in an electromagnetic field, the configuration space $Q$ is three-dimensional euclidean space, with an electric potential function $\phi$ and a magnetic vector potential $\mathbf A$.
The electric field is $\mathbf E=\nabla \phi$ and the magnetic field is $\mathbf B=\nabla \times \mathbf A$.  In Newton's equation of motion $\mathbf F = m \mathbf a$,  $\mathbf F$ is the Lorentz force
$e(\mathbf E + \mathbf v \times \mathbf B)$, a function of both  position and velocity.  

For our purposes, and to allow for generalization, we will use the hamiltonian formalism in which $\mathbf A$ becomes a 1-form via the identification between tangent and cotangent vectors induced by the riemannian metric on euclidean space,
and $\mathbf B$ is the 2-form $d\mathbf A$.  (For more details, see, for example
\cite{ma-ra:mechanics}.)  We can then use the following general setup.
The configuration space $Q$ can be any manifold (not necessarily of dimension 3).  The magnetic field is a 2-form $\mathbf B$ on $Q$.  The Maxwell equation $\nabla \cdot \mathbf B = 0$
 becomes the condition $d\mathbf B = 0$.  We define a new phase space $T^* _\mathbf B Q$ to be $T^*Q$ with the symplectic form $\omega_\mathbf B$ which is the sum of the canonical form $\omega_Q$ and the pullback of $\mathbf B$ by the natural projection $T^*Q\to Q$.  (For convenience, we will assume the charge $e$ to be unity.   If $\mathbf A$ is any 1-form on $Q$, the ``gauge transformation" given by translation by $\mathbf A$ takes $\omega_\mathbf B$ to $\omega_{\mathbf B -d\mathbf A}.$  In particular, $T^*_\mathbf B Q$ is symplectomorphic to $T^*Q$ if and only if $\mathbf B$ is exact.  We assume that $Q$ is provided with a ``kinetic energy" riemannian metric, which implicitly includes the mass, so that the metric identifies velocity $\mathbf v$ in $TQ$ with momentum in $T^*Q$; by abuse of notation, we will also denote the momentum itself by $\mathbf v$.   The hamiltonian for the Lorentz flow is then $\half ||\mathbf v||^2 + \phi$, with the magnetic field encoded in the symplectic form $\omega_\mathbf B$. 

For motion in the magnetic field created by a smooth distribution of monopoles, we must drop the assumption that $d\mathbf B = 0$.
Our goal is to see what remains of the hamiltonian theory, now that the 2-form $\omega_\mathbf B$ is no longer closed.  It is still nondegenerate, though, and so we can ``invert" it to produce a bivector $\pi_\mathbf B$.  As we will see in Section \ref{subsec-twisted}, 
$\pi_\mathbf B$, though no longer necessarily a Poisson structure, is twisted Poisson  for the closed 3-form $\phi = d\mathbf B$.

\section{Twisted Poisson structures}
\label{subsec-twisted}
Twisted Poisson structures were  introduced by Klim\v{c}ik and Strobl \cite{kl-st:} (who called them WZW Poisson structures), inspired by previous work in string theory with closed 3-form ``backgrounds".  They were then named and studied by \v{S}evera and Weinstein  \cite{SW1}  in the setting of Dirac structures.

Just as Poisson structures on a manifold $M$
may be identified with certain Dirac structures in the standard
Courant algebroid $E_0=TM\oplus T^*M$; see Courant \cite{co:dirac}
and Liu {\it et al.}~\cite{li-we-xu:manin}, twisted Poisson structures may be defined as Dirac structures in a modified Courant algebroid $E_\phi$ in which the term $\phi(X_1,X_2,\cdot)$ is added to the  right hand side of the definition
\begin{equation}
\label{eq-courant}
[(X_1, \xi_1),( X_2,\xi_2)]=
([X_1,X_2], {\mathcal L}_{X_1}\xi_2-i_{X_2}d\xi_1),
\end{equation}
of the standard bracket on  $E_0$.

The graph of $\tilde{\pi}$ for a  bivector field\footnote{Here (and analogously for other bilinear forms), $\tilde{\pi}$ is the bundle map from $T^*M$ to $TM$ defined by $\alpha(\tilde{\pi}(\beta))=\pi(\alpha,\beta)$ for 1-forms $\alpha$ and $\beta$.}
$\pi$ on $M$ turns out to be a Dirac structure in $E_\phi$ if
and only if it satisfies the equation
\begin{equation}
\label{eq-quasi}
[\pi,\pi]=2\wedge^3\tilde{\pi}(\phi).
\end{equation}
We call such bivectors, and the associated brackets, {\bf twisted Poisson structures}.  We will also refer to the associated linear map $\tilde{\pi}$ as a twisted Poisson structure.

On the other hand, the graph of $\tilde{\omega}$ for a 2-form $\omega$ on $M$ is a Dirac structure in $E_{\phi}$ if and only if 
$d\omega=\phi$.  In particular, if $\pi$ is a nondegenerate bivector, so that it comes from a 2-form $\omega$ with the same graph, $\pi$ is necessarily a twisted Poisson structure with $\phi=d\omega$.  

Poisson brackets and hamiltonian vector fields are defined as usual:
$\{f,g\}=\pi(df,dg)$ and $H_f=\{\cdot,f\}.$  
With these definitions, the Jacobi
identity becomes:
\begin{equation}
\label{eq-jacobi}
\{\{f,g\},h\} + \text{cyc}=- \phi(H_f,H_g,H_h).
\end{equation}

As with any Dirac structure, (the graph of) a twisted Poisson structure is a Lie algebroid, which here may be identified with the cotangent bundle.  The anchor of this Lie algebroid is the bundle map $\tilde{\pi}$, whose image is the set of values of all hamiltonian vector fields.  As is true for any Lie algebroid, the (generally singular) distribution consisting of these ``hamiltonian vectors" is integrable; the leaves carry non-degenerate (but generally not closed) 2-forms.

Finally, we note that 
\begin{equation}
\label{eq-brackets}
H_{\{f,g\}}+[H_f,H_g]=-\tilde\pi(\phi(H_f,H_g,\cdot)).\end{equation}
 In Section \ref{sec-nontwisted} below, we will use these facts to prove that a certain bivector field is NOT a twisted Poisson structure.

\section{Almost Poisson structure on the dual of an almost Lie algebra}
\label{sec-nontwisted}
Let $\mathfrak g$ be a finite-dimensional  Lie algebra.  It is well known that there is an induced Poisson structure on the dual space $\mathfrak g^*$, known as the {\bf Lie-Poisson structure}.  The bracket on functions is given by
\begin{equation}
\label{eq-liepoisson}
\{f_1,f_2\}(c) = c([df_1(c),df_2(c)]),
\end{equation} 
 where the values of $df_i(c)$ at $c\in \frakg^*$, which belong to $\frakg^{**},$ are considered as elements of $\frakg.$  An infinite-dimensional version of this comes from taking $\mathfrak g$ to be the space $\mathcal F(M)$ of compactly supported smooth functions on a Poisson manifold $M$, with the Poisson bracket Lie algebra structure, and $\mathfrak g^*$  the {\em topological} dual space $\mathcal D(M)$ of distributional densities on $M$.  (The identification of $\frakg^{**}$ with $\frakg$ remains valid in this situation.)

The proof that the bracket on (functions on) $\frakg^*$ satisfies the Jacobi identity depends, of course, on the Jacobi identity in the Lie algebra $\frakg$; it is natural to ask, then, if $\frakg = \mathcal F (M)$ where $M$ is a  twisted Poisson manifold, whether this property carries over to $\mathcal{D}(M)$ as well.  

An {\bf almost Lie algebra} is a vector space $\frakg$ with a bilinear operation which is antisymmetric but which does not necessarily satisfy the Jacobi identity.  Just as for a Lie algebra, there is a bivector on $\frakg^*$ defining a bracket operation by \eqref{eq-liepoisson}.   This bracket is a Poisson bracket, i.e.~it satisfies the Jacobi identity, if and only if $\frakg$ is actually a Lie algebra.

We will show below that, for a certain almost Lie algebra $\frakg$, the almost Poisson structure on $\frakg^*$ is not even twisted Poisson.  Following our observation in Section \ref{subsec-twisted}, it suffices to show that the distribution consisting of values of hamiltonian vector fields is not integrable.  We will do this in the following way.

To determine whether the distribution consisting of hamiltonian vector values on the dual $\mathfrak g^*$ of an almost Lie algebra is integrable, we must test whether the value at each point $f$ of the bracket
$[H_a,H_b]$ of the hamiltonian vector fields of any two functions $a$ and $b$ on $\mathfrak g^*$ is again the value of a hamiltonian vector field.  To show that this is NOT the case, it is sufficient to find {\em linear} functions for which the bracket is not the value at $f$ of the hamiltonian vector field of any linear function $h$, since linear functions have all possible differentials at each point.  Each such linear function $a$ corresponds to an element of  $\mathfrak g$.
In our situation, where $\mathfrak g$ is $\mathcal F(M)$, this element is a function on $M$ which we will also denote by $a$.

For an element $a$ in any almost Lie algebra $\mathfrak g$, the hamiltonian vector field on $\mathfrak{g}^*$ of the linear function corresponding to $a$ is the linear operator $H_a$ given by the ``coadjoint action" of $a$, by which we mean the operator dual to the ``adjoint" operator $A_a:b\mapsto [a,b].$  When $\mathfrak g$ is $\mathcal{F}(M)$ for an almost-Poisson manifold $M$, the bracket $[~,~]$
is the almost Poisson bracket, so we will denote it by $\{~,~\}$ instead.  The adjoint operator is then $b\mapsto \{a,b\}$, which is 
the negative of the hamiltonian vector field $H_{a}$ of the function $a$ on $M$, operating on functions.  The dual of this operator, acting on densities in $\mathcal D(M)$, is the negative of the Lie derivative operator $L_{H_{a}}.$

In the case which will be of interest below, the almost Poisson structure comes from a symplectic structure whose Liouville volume form is invariant under all hamiltonian vector fields.   This enables us to identify the densities with (generalized) functions on $M$, in which case the Lie derivative operator $L_{H_{a}}$ becomes identified with $H_{a}$ itself.  

So we are left with the problem on $M$ of finding functions $a$ and $b$, and a function $f$ (representing a density by multiplication with the Liouville volume form) for which 
$[H_{a},H_{b}]f$ is not equal to $H_{h}f$ for any function $h$.  By Equation \eqref{eq-brackets}, we have
$$[H_{a},H_{b}]f = -H_{\{a,b\}}f - \tilde{\pi}(\phi(H_{a},H_{b},\cdot))f.$$
Since the term $-H_{\{a,b\}}f$ IS the value at $f$ of a hamiltonian vector field, we are left with trying to show that 
$\tilde{\pi}(\phi(H_{a},H_{b},\cdot))f$ is NOT such a value.

On the other hand, for the value at $f$ of the hamiltonian vector field of a function $h$, we have $H_{h}f = \{f,h\}= -H_{f}h.$   Thus, we must show that 
$\tilde{\pi}(\phi(H_{a},H_{b},\cdot))f$ is not in the image of the hamiltonian vector field operator $H_{f}$.  We will use the following simple lemma.

\begin{lemma}
If a function $g$ on $M$ is in the range of the operator given by a vector field $\xi$, then the integral of $g$ around any closed orbit of $\xi$ must be zero.
\end{lemma}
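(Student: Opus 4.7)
The plan is to unwind the definitions. Suppose $g = \xi(h)$ for some smooth function $h$ on $M$, where $\xi(h)$ denotes the Lie derivative of $h$ along the vector field $\xi$, i.e.\ the directional derivative. Let $\gamma \colon [0,T] \to M$ be a closed orbit of $\xi$, so that $\gamma(T)=\gamma(0)$ and $\dot\gamma(t)=\xi(\gamma(t))$ for all $t$.

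The key observation is that, by the chain rule,
\begin{equation}
\frac{d}{dt}\bigl(h\circ\gamma\bigr)(t) = dh_{\gamma(t)}\bigl(\dot\gamma(t)\bigr) = dh_{\gamma(t)}\bigl(\xi(\gamma(t))\bigr) = \xi(h)\bigl(\gamma(t)\bigr) = g\bigl(\gamma(t)\bigr).
\end{equation}
Thus the pullback of $g$ to the orbit is an exact derivative. Integrating over one full period,
\begin{equation}
\int_{0}^{T} g\bigl(\gamma(t)\bigr)\,dt = \int_{0}^{T} \frac{d}{dt}\bigl(h\circ\gamma\bigr)(t)\,dt = h\bigl(\gamma(T)\bigr) - h\bigl(\gamma(0)\bigr) = 0,
\end{equation}
by the fundamental theorem of calculus, and the last equality uses that the orbit is closed.

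There is no real obstacle here; the only subtlety is fixing the convention for ``integral around a closed orbit,'' which I would take to be the integral with respect to the time parameter of the flow (or equivalently, the integral of the 1-form $g\,dt$ pulled back along the orbit). The statement and proof are manifestly invariant under reparametrization of the orbit only up to sign/scaling, but the vanishing conclusion is unaffected. In the application that follows, $\xi$ will be the hamiltonian vector field $H_f$, and this lemma will be used to rule out certain functions from being of the form $H_f h$ by exhibiting a closed orbit of $H_f$ along which the time integral is nonzero.
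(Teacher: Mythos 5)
Your proof is correct and follows exactly the same route as the paper's: write $g$ as the time derivative of $h$ along the orbit and apply the fundamental theorem of calculus over one period. You have simply spelled out the chain-rule computation and the parametrization convention that the paper leaves implicit.
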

\begin{proof}
If $\xi h=g$, then $g$ is the derivative of $h$ along each orbit of $\xi$.
If the orbit is closed, the value of $h$ repeats after a period of the orbit, so the integral of its derivative must be zero.
\end{proof}

It remains, then, to exhibit an example where a twisted Poisson structure $\pi$ on $M$ is the inverse of a nondegenerate 2-form whose associated volume form is invariant under every hamiltonian vector field, but where the integral of 
$\tilde{\pi}(\phi(H_{a},H_{b},\cdot))f$ around a closed orbit of the
hamiltonian vector field $H_{f}$ is not zero for some functions $a$, $b$,
and $f$.  This is what we shall now do, where $M$ is the phase space of a
particle in a magnetic field for a monopole background.

Now,

\begin{ex}\normalfont
\label{ex:nottwisted}
Let $M=\mathbb R^6 = T^* \mathbb R^3$ with coordinates
 $\{x_1,x_2,x_3,p_1,p_2,p_3\}$.  Let $\mathbf B = x_2^2 ~dx_2 \wedge dx_3+ x_1 x_2 dx_1 \wedge dx_3$, so that the symplectic form on $M$ is 
 $$\omega_{\mathbf B}=\sum_i dx_i \wedge dp_1 + x_2^2 ~dx_2 \wedge dx_3+ x_1 x_2~ dx_1 \wedge dx_3.$$  The twisted Poisson structure on $M$ is then
 $$\pi_{\mathbf B}= \sum_i \frac{\partial}{\partial x_i} \wedge 
 \frac{\partial}{\partial p_i} +
 x_2^2 ~\frac{\partial}{\partial p_2} \wedge \frac{\partial}{\partial p_3}+ x_1 x_2~ \frac{\partial}{\partial p_1} \wedge \frac{\partial}{\partial p_3}.$$

The almost Poisson structure on $\mathcal D(M)$ is the following simplified version of the Maxwell-Vlasov bracket \eqref{bracketmonopole}:
 \begin{equation}\label{bracketsimplified}
\{F,G\}=\int{f\left\{ F_{f}, G_{f}\right\}_{\mathrm{CAN}}  d{\mathbf x}d{\mathbf v}}
+ \int{f{\mathbf B} \cdot \left(\frac{\partial F_{f}}{\partial {\mathbf v}} \times\frac{\partial G_{f}}{\partial {\mathbf v}}\right) d{\mathbf x}d{\mathbf v}}\nonumber
\end{equation}

The jacobiator formula 
\eqref{eq-bigjacobiator}
then simplifies to 
\begin{equation*}
\{F,\{G,H\}\}+\text{cyc}=\int{f\left(\nabla \cdot {\mathbf
B} \right)\left[\frac{\partial H_{f}}{\partial {\mathbf v}}\cdot
\left(\frac{\partial F_{f}}{\partial {\mathbf v}} \times\frac{\partial
G_{f}}{\partial {\mathbf v}}\right)\right]d{\mathbf x}d{\mathbf v}}.
\end{equation*}
 
 We have the Schouten bracket:
\begin{equation}
   [\pi_{\mathbf B},\pi_{\mathbf B}]_{SN}=2x_1\frac{\partial }{\partial p_1}\wedge \frac{\partial }{\partial p_2}\wedge \frac{\partial }{\partial p_3},
    \end{equation}
 and $$\phi=d\omega_{\mathbf B}=
 -x_1 dx_1\wedge dx_2 \wedge dx_3.$$
 (Note that Equation \eqref{eq-quasi} is indeed satisfied.)

Let $f$ be the function $x_1p_2-x_2p_1.$  Its hamiltonian vector field is
\begin{equation}
H_{f}=x_1\frac{\partial}{\partial x_2}-x_2\frac{\partial}{\partial x_1}+p_1\frac{\partial}{\partial p_2}-p_2\frac{\partial}{\partial p_1}.
\end{equation}
(The ``$\mathbf B$" part of $\omega_{\mathbf B}$ contributes two terms in $\frac{\partial}{\partial p_3},$ but they cancel one another.)    On each  orbit of $H_{f}$, $x_3$ and $p_3$ are constant, while the projections in the $(x_1,x_2)$ and $(p_1,p_2)$ planes each traverse at unit speed circles centered at the origin.  In particular, all of the orbits are periodic.

 It is easy to check that the Liouville volume is preserved under the Hamiltonian flow, i.e., $\mathcal{L}_{H_{f}}\omega_{{\mathbf B}}^3=0$.  We have 
$ \mathcal{L}_{H_{f}}\omega_{\mathbf B}^3=3\omega_{{\mathbf B}}^2\wedge {\mathcal{L}}_{H_{f}}\omega_{\mathbf B}.$
But
 \begin{equation}
\mathcal{L}_{H_{f}}\omega_{{\mathbf B}}=x_1^2 dx_1\wedge dx_3+ x_1 x_2 dx_2\wedge dx_3,
 \end{equation}
and so,
\begin{equation*}
\omega_{{\mathbf B}}^2\wedge \mathcal{L}_{H_{f}}\omega_{{\mathbf B}}=0.
\end{equation*}

Now we choose $a=p_3,b=p_1$.  
For their hamiltonian vector fields, we have
$$H_{a}=H_{p_3}=\frac{\partial}{\partial x_3}  +
 x_2^2 ~\frac{\partial}{\partial p_2} + x_1 x_2~ \frac{\partial}{\partial p_1},
$$ and
$$  H_{b}=H_{p_1}=\frac{\partial}{\partial x_1} - x_1 x_2~ \frac{\partial}{\partial p_3}. $$
Contracting these with $\phi= -x_1 dx_1\wedge dx_2 \wedge dx_3$
gives $x_1 dx_2,$ applying $\tilde{\pi}$ gives $x_1 \frac{\partial}{\partial p_2},$ and operating on $f = x_1p_2-x_2p_1$ gives $x_1^2.$
The integral of this nonnegative function around almost every one of the periodic orbits of $H_{f}$ is obviously positive (in fact, it is just the number $\pi$). 

This shows that the distribution on $\mathcal{D}$ spanned by hamiltonian vector fields is not integrable; hence, our almost Poisson structure  is not twisted Poisson.

\end{ex}

\section{Discussion}
One may wonder why a twisted Poisson structure on phase space could fail to be twisted Poisson when lifted to the space of densities.  Here are some thoughts on a possible answer.  Any almost Lie bracket $\beta$ on a vector space $\frakg$ (in our case, the functions on phase space) ``lifts" to an almost Poisson structure on $\frakg^*.$  When $\beta$ satisfies the Jacobi identity, so does the lift. But the condition that $\beta$ be a twisted Poisson bracket involves more than the vector space structure on $\frakg$.  $\frakg$ must also have a multiplicative structure so that there is a notion of 3-form.  But we see no way of lifting this structure to the functions on $\frakg^*$ so as to construct a 3-form on $\frakg^*$ which would make the lifted bracket twisted Poisson.

It still might be the case that some almost Lie but not Lie brackets on $\frakg$
could lift to twisted Poisson structures on $\frakg^*$.  A particular case of interest would be that where $\frakg$ is the space of functions on the phase space of a particle in the magnetic field of a {\em uniform} distribution of monopoles.

Finally, we may ask whether, for the Maxwell-Vlasov bracket, there is any useful remnant of the fact that the single-particle bracket is twisted Poisson.


\begin{thebibliography}{10}
\bibitem{mo:}
P.J. Morrison,
Maxwell-Vlasov equations as a continuous Hamiltonian system,
{\em Physics Letters A} {\bf 80} (1980), 383--386.

\bibitem{we-mo:}
A. Weinstein, P.J. Morrison,
Comment on the Maxwell-Vlasov equations as a continuous Hamiltonian system, 
{\em Phys. Lett. A} {\bf 86} (1981), 235-236 . 

\bibitem{MW1}
J.E. Marsden and A. Weinstein,
The Hamiltonian structure of the Maxwell-Vlasov equations,
{\em Physica D: Nonlinear Phenomena} {\bf 4} (1982), 394-406 .

\bibitem{mo2:}
 P. J. Morrison, 
Poisson brackets for fluids and plasmas,
{\em AIP Conf. Proc.}
{\bf 88} (1982), 13-46.

\bibitem{ma-ra:mechanics}
J.E. Marsden and T. Ratiu,
{\em Introduction to Mechanics and Symmetry},
Texts in Applied Mathematics {\bf 17}, Springer,  2nd Ed. 2002.

\bibitem{di:}
P.A.M. Dirac, 
Quantised singularities in the electromagnetic field,
{\em Proc. Roy. Soc. A} {\bf 133} (1931), 60-72.

\bibitem{wu-ya:}
T.T. Wu and C.N. Yang,
Dirac's monopole without strings: Classical Lagrangian theory
{\em Phys. Rev. D} {\bf 14} (1976), 437-445.

\bibitem{HM1}
M. Heninger and P. Morrison, Hamiltonian nature of monopole dynamics (preprint), arXiv 1808.08689 (2018).

\bibitem{al-st:current}
Alekseev, A. and Strobl, T., Current algebras and differential geometry, {\em J. High Energy Phys.}  no. 3, (2005), 035, 14 pp.

\bibitem{ku-sz:symplectic}
V.G. Kupriyanov and R.J. Szabo, 
Symplectic realization of electric charge in fields of monopole distributions,
{\em Phys. Rev. D} {\bf 98} (2018), no. 4, 045005, 25 pp. 

\bibitem{SW1}
P. \v{S}evera and A. Weinstein
Poisson Geometry with a 3-Form Background,
{\em Progress of Theoretical Physics Supplement} {\bf 144} (2001), 145-154.

\bibitem{bana:}
P. Balseiro and L.C. Garc\'ia-Naranjo,
Gauge transformations, Twisted Poisson brackets and hamiltonization of nonholonomic Systems,
{\em Archive for Rational Mechanics and Analysis} {\bf 205} (2012), 267-310.

 




\bibitem{kl-st:}
C. Klim\v{c}ik and T. Strobl,
WZW-Poisson manifolds,
{\em J. Geom. Phys.} {\bf 4} (2002), 341-344.

\bibitem{co:dirac}
T.J. Courant, Dirac Manifolds,
{\em Trans. Amer. Math. Soc.} {\bf 319} (1990), 631-661.






\bibitem{li-we-xu:manin}
Z.J. Liu, A. Weinstein, and P. Xu,  Manin triples for Lie
bialgebroids, {\em J. Diff. Geom.} {\bf 45} (1997), 547-574.


\end{thebibliography}
\end{document}